\documentclass[letterpaper, 10 pt, conference]{ieeeconf}

\IEEEoverridecommandlockouts
\usepackage{graphicx}

\usepackage{amssymb}
\usepackage{subcaption}
\usepackage{amsmath} 
\usepackage{ntheorem}
\usepackage{xcolor}
\newtheorem{remark}{Remark}
\newtheorem{definition}{Definition}
\newtheorem{lemma}{Lemma}
\newtheorem{theorem}{Theorem}
\usepackage{mathtools}
\newtheorem{proposition}{Proposition}

\newtheorem{assumption}{Assumption}
\usepackage{microtype}
\newcommand{\sg}[1]{{\color{black} #1}}

\newcommand\MATRIX[1]{\begin{bmatrix} #1 \end{bmatrix}}

\title{
Price-Coordinated Mean Field Games with State Augmentation for Decentralized Battery Charging
}

\author{Nour Al Dandachly,  Shuang Gao and Roland Malhamé
\thanks{*This work is supported in part by  NSERC and FRQNT.}
\thanks{The authors are with the Department of Electrical Engineering, Polytechnique Montreal, and GERAD, Montreal, Quebec, Canada,  H3T 1J4.
         Email: {\tt\small $\{$nour.al-dandachly, shuang.gao,roland.malhame$\}$@polymtl.ca}}}

\begin{document}

\maketitle
\thispagestyle{empty}
\pagestyle{empty}

\begin{abstract}
This paper addresses the decentralized coordinated charging problem for a large
population of battery storage agents  (e.g. residential batteries, electrical vehicles, charging station batteries) using  Mean Field Game (MFG). 
Agents are assumed to have {affine dynamics} and are coupled through a price that is continuous and monotonically increasing with respect to the difference between the average charging power and the grid's desired average charging power.
An important
modeling feature of the proposed framework is the state augmentation, that is,  the
charging power is treated as a state variable and its rate of change (i.e. the ramp
rate)  as the control input. 
The resulting MFG equilibrium is characterized by two nonlinearly 
coupled forward-backward differential equations.
The existence and uniqueness of the MFG equilibrium is established for any continuous and monotonically increasing nonlinear price function 
without additional restrictions on the time horizon.
Moreover, in the special case where the price is  affine in the average charging power,  we further simplify the characterization of the MFG equilibrium strategy via two separate Riccati equations, both of which admit unique positive semi-definite solutions without additional assumptions.
\end{abstract}

\section{INTRODUCTION}

The large-scale deployment of battery storage assets ranging from residential batteries and electric vehicles (EVs) to charging station batteries is important in current energy decarbonization strategies \cite{IEA2024, Arbabzadeh2019, Golombek2022}. However, uncoordinated charging of a large population of such devices can impose significant stress on the power grid, increasing peak demand and inducing sharp aggregate load variations that complicate grid balancing and voltage regulation. Such concerns motivate the design of scalable and decentralized coordination mechanisms to shape collective charging behavior for demand response such as valley-filling \cite{MaCallawayHiskens2013, GanTopcuLow2013} and peak shaving.

Mean Field Game (MFG) theory, introduced by Lasry and Lions \cite{LasryLions2007} and independently by Huang, Caines, and Malham\'e \cite{HuangCainesMalhame2006}, provides a framework for designing decentralized strategies for large populations of strategically competitive agents. A particularly relevant class of models within this framework involves agent interactions mediated through price mechanisms. Price-based MFGs have been used for electricity market price formation
\cite{gomes2021mean}, applications involving renewable energy \cite{campi2026,shrivats2022mean}, storage coordination in smart grids 
\cite{Alasseur2020}, household electric heaters \cite{tchuendom2024}, and EV charging 
\cite{Hedel2024, Tajeddini2019,Couillet2012}. Related formulations for EV charging include a deterministic mean field control approach based on a broadcast incentive signal \cite{Grammatico2016}, and an aggregative game approach \cite{Paccagnan2016}. Other MFG formulations that do not rely on a mean-field-dependent price have been applied to demand management of thermostatically controlled loads 
\cite{BagagioloBauso2014} and EV charigng coordination \cite{BausoNamerikawa2019,Lin2022,Caballe2026}. Within the literature on EV charging, in the formulations \cite{Hedel2024,Tajeddini2019,Couillet2012,Grammatico2016,
Paccagnan2016,Lin2022,Caballe2026},
the charging power is treated as a control input, so its rate of change 
enters neither the state dynamics nor the cost directly.

In contrast, we introduce a state augmentation that treats the charging 
power as a state variable and its rate of change as the control input.

We formulate the coordinated battery-charging problem as a price-coordinated MFG problem with affine dynamics and quadratic costs. Following the state augmentation described above, interactions among agents are modeled through a price signal that couples to the population average charging power. We note that such a modeling choice in the context of centralized control with a mean-field-dependent price is explored in a recent work \cite{fabini2026trading}.

\textbf{Contribution}: First, we introduce the state augmentation  for the applications of MFGs to  decentralized price-coordinated  charging  of battery storage agents, where each agent's charging power is treated as a state variable. Such a state augmentation leads to  simplified representations of the MFG strategy. Second, we establish the existence and uniqueness of the MFG equilibrium, which holds for any continuous and monotonically increasing price function.
An important feature of this result is that, different from the linear-quadratic 
MFGs \cite[Theorem~3.3]{Bensoussan2016}, \cite{HuangCainesMalhame2007} and from 
prior battery charging MFG results relying on a linear price and a contraction 
condition \cite{Tajeddini2019}, the existence and uniqueness of the MFG equilibrium 
does not involve a contraction condition that restricts the (price) coupling 
strength and time horizon.
The proof relies on constructing an auxiliary convex optimal control problem whose necessary and sufficient conditions for optimality coincide with those for the existence and uniqueness of the MFG equilibrium. Third, for the special case where the price function is affine, we establish a simplified representation of the MFG equilibrium by identifying two  separate
 Riccati equations, both of which admit unique positive semi-definite solutions.
All results are established at the level of the limit MFG, corresponding to the infinite-population regime.

\textbf{Notation}: $\mathbb{R}$ denotes the set of real numbers. $\mathbb{R}^n$ denotes the $n$-dimensional Euclidean space. 
For a matrix $A$,
$A \geq 0$ (resp.\ $A > 0$)  means that $A$ is positive 
semidefinite (resp.\ positive definite). $A^\top$ denotes the transpose of $A$. 
$\mathbb{E}[\cdot]$ denotes the expectation. For a scalar function $V(\sg{z},t)$ with 
$\sg{z} \in \mathbb{R}^n$, $\nabla_{\sg{z}} V$ and $\nabla^2_\sg{z} V$ denote its gradient and Hessian 
with respect to the variable $z$, respectively. $(\Omega,\mathcal{F},
\mathbb{P})$ 
denotes a complete probability space.
For a Hilbert space $\mathbb{H}$, we use $C([0, T];\mathbb{H})$ 
(resp.\ $C_1([0, T];\mathbb{H})$) to denote the space of continuous (resp.\ continuously 
differentiable) functions from $[0,T]$ to $\mathbb{H}$, and $L^2([0,T];\mathbb{R})$ to 
denote the space of square-integrable functions from $[0,T]$ to $\mathbb{R}$.

\section{Model and Problem Formulation}
\label{sec:problem_formulation}

We consider a population of $N$ energy storage agents (e.g. electric vehicles, residential
batteries, or charging station batteries) for a finite time horizon $[0, T]$. The dynamics of agent $i\in\{1,\dots,N\}$ can be modeled by 
\begin{equation}\label{eq:component_dynamics}
\begin{aligned}
dx_{1,i}(t) &= \kappa\, x_{2,i}(t)\, dt - b(t)\,dt + \sigma_1 dw_{1,i}(t),\\
dx_{2,i}(t) &= u_i(t)\, dt + \sigma_2\, dw_{2,i}(t),
\end{aligned}
\end{equation}
with initial conditions $x_{1,i}(0)=x_{1,i,0}$ and
$x_{2,i}(0)=x_{2,i,0}$.
Here $x_{1,i}(t)$ is the state of charge (SOC) of agent $i$ in kWh and $x_{2,i}(t)$ denotes its charging power in kW at time $t$. The processes $\{w_{1,i}(\cdot)\}_{i=1}^N$ and $\{w_{2,i}(\cdot)\}_{i=1}^N$ are independent standard Brownian motions, and   $u_i(t)\in\mathbb{R}$ is the control input representing the rate of change of the charging power at time $t$,  referred to as the ramp rate \cite{Sukumar2018}.
The parameter $\kappa\in (0,1)$ is the conversion efficiency from the charging power to the increase of SOC per unit time. The noise intensity $\sigma_1>0$ for the SOC dynamics represents stochastic variations in energy consumption due to auxiliary loads. The noise intensity $\sigma_2>0$ for the charging power dynamics represents fluctuations in the charger output (e.g. due to control loop imperfections, voltage variations, and transient dynamics of the power electronic equipment). The function $b(\cdot):[0,T]\to\mathbb{R}_{+}$ is assumed to be a deterministic piece-wise continuous function representing the EV-baseline power consumption due to auxiliary loads and, more generally, any exogenous power drain on the battery.
Let 
$$
x_i(t)=[x_{1,i}(t)~ x_{2,i}(t)]^\top \text{ and } w_i(t)=[w_{1,i}(t)~  w_{2,i}(t)]^\top.  
$$ Then the dynamics of agent $i$ in \eqref{eq:component_dynamics} can be equivalently represented  by 
\begin{align}
dx_i(t) = \big(A x_i(t) + B u_i(t) + f(t)\big)\,dt+ \Sigma\,dw_i(t),\label{eq:dynamics}
\end{align}
where
\[
\begin{aligned}
    & A=\MATRIX{0&\kappa\\0&0},
	B=\MATRIX{0\\1}, 
	 f(t)=\MATRIX{-b(t)\\0},
	\Sigma=\MATRIX{\sigma_1 & 0 \\ 0& \sigma_2}
\end{aligned}
\]
with initial condition $x_i(0) = x_{i,0} \in \mathbb{R}^2$, where $x_{i,0} = [x_{1,i,0},\; x_{2,i,0}]^\top$. The initial states $\{x_{i,0}\}_{i=1}^N$ are assumed to be independent and identically distributed (i.i.d.) with known mean $\mathbb{E}[x_{i,0}]$ and finite variance.

\subsection{Cost Functional with Price Coupling}

Let $r_x(t) = [r_{x,1}(t), r_{x,2}(t)]^\top$ denote the desired reference at time $t$, where $r_{x,1}(t)$ and $r_{x,2}(t)$ are the target SOC and charging power, respectively.

For agent $i\in \{1,\dots, N\}$, the cost functional is given by
\begin{equation}
J_i^N (u_i)
	=
	\mathbb{E}\!\left[
	\int_0^T \ell_i^N (x_i(t), u_i(t), t)\, dt
	+ g_i(x_i(T))
	\right],
	\label{eq:cost}
\end{equation}
where $\ell_i^N:\mathbb{R}^2 \times \mathbb{R} \times [0,T]\to\mathbb{R}$ is the instantaneous cost defined as
\begin{equation}
	\begin{split}
	    \ell_i^N(x_i(t),u_i(t),t) =\;&
	    \frac{1}{2}\big(x_i(t)-r_x(t)\big)^\top
	    Q\big(x_i(t)-r_x(t)\big)\\
	    & + p^N(t)\, e_2^\top x_i(t)
	    + \frac{1}{2}u_i(t)^\top R u_i(t),
	\end{split}
	\label{eq:running_cost}
\end{equation}
and $g_i:\mathbb{R}^2\to\mathbb{R}$ is the terminal cost given by
\begin{equation}
	    g_i(x_i(T)) =
	    \frac{1}{2}\big(x_i(T)-r_x(T)\big)^\top
	    Q_T\big(x_i(T)-r_x(T)\big),
	    \label{eq:terminal_cost}
\end{equation}
with $Q\geq 0$,  $Q_T \geq 0$,  $R>0$. Let $e_2: =(0,1)^\top$ so that $e_2^\top x_i(t)=x_{2,i}(t)$, and $p^N(t)\in\mathbb{R}$ is the price signal at time $t$ given by
\begin{equation}
\label{eq:mean_field_price}
p^N(t)=\alpha\big(\bar x_2^N(t)-r_g(t)\big)
\end{equation}
where
$\bar{x}_2^{N}(t):=\frac{1}{N}\sum_{i=1}^N x_{2,i}(t)$
is the  population average charging power,
$r_{\mathrm{g}}(t)$ is the grid operator's target charging power per agent, 
and $\alpha(\cdot) : \mathbb{R} \to \mathbb{R}$ is a price function.

We introduce the following assumptions.
\begin{assumption} \label{ass:reference_signals}
    The reference signals $r_x$ and $r_g$ are continuous over $[0,T)$.
\end{assumption}

\begin{assumption} \label{ass:price_function}
    The price function $\alpha(\cdot) : \mathbb{R} \to \mathbb{R}$ is  monotonically 
    increasing and continuous.
\end{assumption}

The price signal $p^N(t)$ in \eqref{eq:mean_field_price} provides a 
coordination mechanism: under Assumption~\ref{ass:price_function}, it incentivizes agents to align their average charging power 
with the grid target $r_{\mathrm{g}}(t)$. When the population average exceeds the 
target, the price increases, discouraging further charging; when it falls below, 
the price decreases, encouraging charging.

The running cost in \eqref{eq:running_cost} is equivalent to
\begin{equation}
\begin{aligned}
\ell_i^N(x_i(t),u_i(t),t)
	=\;&
	\frac{1}{2}x_i(t)^\top Q x_i(t)
	- x_i(t)^\top Q r_x(t) \\
	&\quad
	+ p^N(t)\, e_2^\top x_i(t) \\
	&\quad
	+ \frac{1}{2}u_i(t)^\top R u_i(t)
	+ c(t),
\end{aligned}
\label{eq:running_cost_equiv}
\end{equation}
where $c(t)=\frac{1}{2}r_x(t)^\top Q r_x(t)$ is a deterministic scalar function of time, independent of the state and control variables.
The coupling appears only in the linear term $p^N(t) e_2^\top x_i(t)$, while the quadratic term $\frac{1}{2}x_i(t)^\top Q x_i(t)$ is independent of the population average.

  The charging power is modeled as a continuous variable in the dynamics \eqref{eq:dynamics},
  which is possible since modern chargers permit fine-grained adjustment of charging current \cite{Richardson2012}.

\section{Mean Field Game Equilibrium}
\label{sec:mfg}

We adopt the fixed-point approach for MFGs developed in \cite{HuangCainesMalhame2006} to analyze the large-population limit of the $N$-agent stochastic game introduced in Section~\ref{sec:problem_formulation}. While the individual agent's optimization problem has a linear-quadratic structure, the mean field coupling through the nonlinear price function $\alpha$ distinguishes our problem from classical linear-quadratic MFGs. As $N\to\infty$, under the standard MFG framework \cite{HuangCainesMalhame2006}, the empirical state average converges to a deterministic mean field trajectory $\bar{x}(\cdot)$ (assuming the limit exists).
For a generic agent $i$, the mean field trajectory satisfies
\[
\begin{aligned}
\bar{x}(t)
&= \mathbb{E}[x_i(t)]
= \lim_{N\to\infty} \bar{x}^{\,N}(t), \\
\bar{x}_2(t)
&= e_2^\top \bar{x}(t)
= \mathbb{E}[x_{2,i}(t)].
\end{aligned}
\]
where $x_i(t)$ denotes the state of agent $i$ at time $t$, $\bar{x}^N(t)= \frac{1}{N}\sum_{i=1}^N x_i(t) \in \mathbb{R}^2 $  is the empirical state average.  
Let $\mathcal{F}_t^i := \sigma\{x_i(s) : 0 \leq s \leq t\}$ denote 
the natural filtration generated by agent $i$'s state process, $\mathcal{U}_i$ denote the set of admissible control processes
\[
\mathcal{U}_i =
\left\{
\begin{array}{c}
u_i:[0,T]\times\Omega\to\mathbb{R}\ \big|\ u_i(\cdot)\ 
\text{is $\{\mathcal{F}_t^i\}$-adapted,} \\[4pt]
\mathbb{E}\!\int_0^T |u_i(t)|^2\,dt < \infty
\end{array}
\right\}.
\]

Given a deterministic mean charging power trajectory $\bar{x}_2(\cdot)$, the best-response problem for agent $i$ is to find a control $u_i \in \mathcal{U}_i$ that minimizes the cost functional
\begin{equation}
J_i(u_i;\bar{x}_2(\cdot))
	=
	\mathbb{E}\!\left[
	\int_0^T \ell_i\big(x_i(t),u_i(t),t\big)\,dt
	+
	g_i\big(x_i(T)\big)
	\right]
	\label{eq:limit-cost}
\end{equation}
subject to the state dynamics \eqref{eq:dynamics}. The running cost
$\ell_i(\cdot)$ is obtained from its finite-population counterpart $\ell_i^N(\cdot)$ 
in \eqref{eq:running_cost}
by replacing the empirical average charging power $\bar{x}_2^N(\cdot)$ with its deterministic limit $\bar{x}_2(\cdot)$. 
That is,
\begin{equation}\label{eq:running_cost_limit}
\begin{split}
\ell_i(x_i(t),u_i(t),t) =\;&
\tfrac{1}{2}\big(x_i(t)-r_x(t)\big)^\top Q\big(x_i(t)-r_x(t)\big)\\
&+ p(t)\,e_2^\top x_i(t) + \tfrac{1}{2}u_i(t)^\top Ru_i(t),
\end{split}
\end{equation}
with $p(t) = \alpha\big(\bar{x}_2(t) - r_g(t)\big)$. The interaction among agents enters the cost functional only through the mean charging power trajectory $\bar{x}_2(\cdot)$, which appears in the price signal $p(t)$.

\begin{remark}
The MFG problem above differs from classical LQ-MFG problems \cite[Theorem~3.3]{Bensoussan2016}, \cite{HuangCainesMalhame2007} where the coupling typically appears in the quadratic tracking term. Nevertheless, for a given mean field trajectory $\bar{x}_2(\cdot)$, the individual agent's best-response problem is a standard linear-quadratic optimal control problem with a time-varying  coefficient $p(t)$ even  when $\alpha(\cdot)$ is a general nonlinear function.
\end{remark}

An \emph{MFG equilibrium} is a pair $(\gamma^\star, \bar{x}^\star)$ consisting of a feedback strategy $\gamma^\star: [0,T] \times \mathbb{R}^2 \to \mathbb{R}$ and a deterministic mean field trajectory $\bar{x}^\star: [0,T] \to \mathbb{R}^2$ that satisfy the following conditions:
\begin{enumerate}
   \item[\textbf{(i)}] \textbf{Optimality:}
Given the mean charging power trajectory
$\bar{x}_2^\star(\cdot) = e_2^\top \bar{x}^\star(\cdot)$,
the control $u_i^\star(t) = \gamma^\star(t, x_i^\star(t))$ is optimal for
agent $i$'s best-response problem \eqref{eq:limit-cost}, i.e.,
\[
u_i^\star
\in
\arg\min_{u_i \in \mathcal{U}_i}
J_i\!\big(u_i;\,\bar{x}_2^\star(\cdot)\big).
\]

    \item[\textbf{(ii)}] \textbf{Consistency:} When agent $i$ uses the feedback law $\gamma^\star$, the induced state trajectory $x_i^\star(\cdot)$ (where $u_i^\star(t) = \gamma^\star(t, x_i^\star(t))$) satisfies the consistency condition
    \begin{equation}
    \label{eq:consistency_condition}
    \bar{x}_2^\star(t) = \mathbb{E}\!\left[ x_{2,i}^\star(t) \right], \qquad \forall\, t\in[0,T].
    \end{equation}
\end{enumerate}

\subsection{Best Response Problem}
\label{subsec:hjb}
To characterize the MFG equilibrium, we solve the best-response problem 
\eqref{eq:limit-cost} for agent $i$ given a fixed candidate mean charging power 
trajectory $\bar{x}_2(\cdot)$. 
Consider the value function
\begin{equation}
\label{eq:value_function}
V_i(t,z)
=
\inf_{u_i\in \mathcal{U}_i}
\mathbb{E}\!\Big[ 
\int_t^T \ell_i\big(x_i(s),u_i(s),s\big)\,ds
+
g_i\big(x_i(T)\big)
\Big],
\end{equation}
with $(t,z)\in[0,T]\times\mathbb{R}^2$, subject to the dynamics 
\eqref{eq:dynamics} with initial condition $x_i(t)=z$ at time $t$.

For the linear-quadratic optimal control problem in \eqref{eq:limit-cost}, the value function 
$V_i$ satisfies the Hamilton--Jacobi--Bellman (HJB) equation 
\cite[Ch.~4]{YongZhou1999}
\begin{equation}
\label{eq:hjb}
\begin{aligned}
-\partial_t V_i(t,z)
& =
\inf_{u_i\in\mathbb{R}}
\Big\{
\nabla_{z} V_i(t,z)^\top\!\big(Az+Bu_i+f(t)\big) \\
&\quad
+\tfrac12\,u_i^\top R u_i 
+\tfrac12\big(z-r_x(t)\big)^\top
Q\big(z-r_x(t)\big) \\
&\quad
+p(t)\,e_2^\top z
\Big\} +\tfrac12\,\mathrm{Tr}\!\big(\Sigma\Sigma^\top\nabla_{z}^2 V_i(t,z)\big),
\end{aligned}
\end{equation}
with terminal condition
\begin{equation}
\label{eq:hjb_terminal}
V_i(T,z)
=
\tfrac12\big(z-r_x(T)\big)^\top
Q_T\big(z-r_x(T)\big),
\end{equation}
where $p(t) = \alpha(\bar{x}_2(t) - r_g(t))$ is the price signal. 
Exploiting the linear-quadratic structure, we consider the quadratic ansatz
\begin{equation}
\label{eq:value_ansatz}
V_i(t,z)
=
\frac12\,z^\top P(t)z + s(t)^\top z + \phi(t),\quad z\in\mathbb{R}^2
\end{equation} where $P(t)\in\mathbb{R}^{2\times 2}$ is symmetric, $s(t)\in\mathbb{R}^2$, 
and $\phi(t)\in\mathbb{R}$. Since all agents are homogeneous, the value function $V_i \equiv V$ is 
identical for every agent $i$. Accordingly, the coefficients $P(t)$, 
$s(t)$, and $\phi(t)$ in the ansatz \eqref{eq:value_ansatz} are 
independent of $i$ and hence carry no agent index.
 The optimal control in \eqref{eq:hjb}
for agent $i$ is then given by 
\begin{equation}
\label{eq:opt_control}
u_i^\star(t) = \gamma^\star\!\big(t,\,x_i^\star(t)\big)
= -R^{-1}B^\top\!\big(P(t)\,x_i^\star(t)+s(t)\big),
\end{equation}
with the feedback law $\gamma^\star(t,z) = -R^{-1}B^\top(P(t)z+s(t))$.
Substituting \eqref{eq:value_ansatz} and 
\eqref{eq:opt_control} into \eqref{eq:hjb} and matching coefficients of 
equal order in $z$ yields the following equations
\begin{subequations}
\label{eq:riccati_system}
\begin{align}
-\dot P(t)
&=
A^\top P(t)+P(t)A
-P(t)BR^{-1}B^\top P(t)+Q,
\label{eq:riccati_P}\\[0.4em]
-\dot s(t)
&=
\Big(A^\top-P(t)BR^{-1}B^\top\Big)s(t)\notag\\
&\quad
+P(t)f(t)
+\;p(t)\,e_2
-\;Qr_x(t),
\label{eq:riccati_s}\\[0.4em]
-\dot\phi(t)
&=
-\frac12\, s(t)^\top BR^{-1}B^\top s(t)
+s(t)^\top f(t)\notag\\
&\quad
+\frac12\,\mathrm{Tr}\!\big(\Sigma\Sigma^\top P(t)\big)
+\frac12\,r_x(t)^\top Qr_x(t),
\label{eq:riccati_phi}
\end{align}
\end{subequations}
with the terminal conditions
\begin{equation}
\label{eq:terminal_conditions}
\begin{aligned}
P(T)   &= Q_T, \quad 
s(T)   = -\,Q_T r_x(T),\\
\phi(T)&= \tfrac12\,r_x(T)^\top Q_T r_x(T).
\end{aligned}
\end{equation}

\subsection{Closed-Loop Dynamics and Mean Field Consistency}
\label{subsec:consistency}
Define the closed-loop matrix
\begin{equation}
\label{eq:closed_loop_matrix}
A_{\mathrm{cl}}(t) = A-BR^{-1}B^\top P(t).
\end{equation}
Under the optimal control law \eqref{eq:opt_control} and using 
\eqref{eq:riccati_P}--\eqref{eq:riccati_s}, let $x_i^\star(\cdot
)$ denote the 
optimal state process of agent $i$. Then $x_i^\star(t)$ evolves as
\begin{equation}
\label{eq:closed_loop}
dx_i^\star(t)
=
\Big(A_{\mathrm{cl}}(t)x_i^\star(t)-BR^{-1}B^\top s(t)+f(t)\Big)\,dt
+\Sigma\,dw_i(t).
\end{equation}
Taking the expectation of the solution to  \eqref{eq:closed_loop} and denoting
$\bar{x}(t)=\mathbb{E}[x_i^\star(t)]$
yields the  deterministic mean dynamics
\begin{equation}
\label{eq:mean_ode}
\dot{\bar{x}}(t)
=
A_{\mathrm{cl}}(t)\bar{x}(t)
- BR^{-1}B^\top s(t)
+ f(t),
\quad
\bar{x}(0)=\mathbb{E}[x_{i,0}].
\end{equation}
Since $\bar{x}_2(t) = e_2^\top\bar{x}(t)$, the price signal can be expressed as
\begin{equation}
\label{eq:mfg_fixed_point}
p(t)=\alpha\big(e_2^\top\bar{x}(t)-r_g(t)\big),
\end{equation}
which, together with \eqref{eq:mean_ode} and \eqref{eq:riccati_s}, 
forms a coupled system that must be solved simultaneously.

\subsection{MFG Strategy and the Associated TPBVP}

From subsections~\ref{subsec:hjb} and \ref{subsec:consistency}, the MFG 
equilibrium strategy,
shared by all agents, is given by the feedback law 
\begin{equation}
    \gamma^\star(t,z) = -R^{-1}B^\top\big(P(t)z + s(t)\big),
    \label{eq:mfg_optimal_control}
\end{equation}
and each agent $i$ applies the control $u_i^\star(t) = \gamma^\star(t, x_i^\star(t))$ where $P(t)$ solves the standard Riccati equation
\begin{equation}\label{eq:mfgs_riccati_P}
\begin{aligned}
-\dot{P}(t)
&= A^\top P(t) + P(t)A - P(t)BR^{-1}B^\top P(t) + Q, \\[0.5ex]
P(T) &= Q_T,
\end{aligned}
\end{equation}
 $s(t)$ is the adjoint state satisfying the backward linear differential equation
\begin{equation}\label{eq:mfgs_adjoint}
\begin{aligned}
-\dot{s}(t)
&= \big(A^\top - P(t)BR^{-1}B^\top\big)s(t) + P(t)f(t) \\[0.5ex]
&\quad + \alpha\big(e_2^\top\bar{x}(t) - r_g(t)\big)e_2
- Qr_x(t),
\end{aligned}
\end{equation}
with terminal condition
\begin{equation}
    s(T) = -Q_T r_x(T),
    \label{eq:mfgs_adjoint_terminal}
\end{equation}
and the evolution of the population mean state is  given by
\begin{equation}
    \dot{\bar{x}}(t) = \big(A - BR^{-1}B^\top P(t)\big)\bar{x}(t) - BR^{-1}B^\top s(t) + f(t),
    \label{eq:mfgs_mean_forward}
\end{equation}
with the initial condition
\begin{equation}
    \bar{x}(0) = \mathbb{E}[x_{i,0}].
    \label{eq:mfgs_mean_initial}
\end{equation}

\subsection{Existence and Uniqueness of the TPBVP Solution}

The MFG strategy is characterized by the coupled system \eqref{eq:mfgs_adjoint}--\eqref{eq:mfgs_mean_initial}, which forms a two-point boundary value problem (TPBVP). In the following, we establish that this TPBVP admits a unique solution. The proof relies on analyzing an auxiliary deterministic convex optimal control problem, whose  solution is uniquely  characterized by the TPBVP.

Consider the auxiliary problem of finding a control $u(\cdot) \in L^2([0,T];\mathbb{R})$ that minimizes the cost functional
\begin{equation} \label{eq:aux_cost}
\begin{aligned}
J(u)&=\int_0^T\Big(\tfrac12 y(t)^\top Q y(t)+\tfrac12 u(t)^\top Ru(t)\\
& \quad  +\Phi(e_2^\top y(t)-r_g(t))-r_x(t)^\top Q y(t)\Big)dt\\
&\quad+\tfrac12 (y(T)-r_x(T))^\top Q_T(y(T)-r_x(T))
\end{aligned}
\end{equation}
where $\Phi(d) = \int_0^d \alpha(\tau)\,d\tau$, and $y(\cdot)$ is the state trajectory governed by the dynamics
\begin{equation}\label{eq:aux_dynamics}
\dot{y}(t) = Ay(t) + Bu(t) + f(t), \quad y(0)=\mathbb{E}[x_{i,0}].
\end{equation}

\begin{lemma}[Existence, Uniqueness, and Strict Convexity]
\label{lem:aux_existence_uniqueness}
Under Assumptions~1 and~2, the cost functional $J(u)$ defined 
in \eqref{eq:aux_cost} is strictly convex on $L^2([0,T];\mathbb{R})$, 
and there exists a unique optimal control $u^* \in L^2([0,T];\mathbb{R})$ 
minimizing $J(u)$ subject to the dynamics \eqref{eq:aux_dynamics}.
\end{lemma}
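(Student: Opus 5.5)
The plan is the direct method of the calculus of variations on the Hilbert space $L^2([0,T];\mathbb{R})$: show that $J$ is finite, continuous, strictly convex and coercive, and then use weak compactness.

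\textbf{Step 1 (affine control-to-state map).} By variation of constants, $y_u(t)=e^{At}\mathbb{E}[x_{i,0}]+\int_0^t e^{A(t-\tau)}\big(Bu(\tau)+f(\tau)\big)d\tau$. Write this as $y_u=y_0+\mathcal{L}u$. Here $y_0$ is the trajectory with $u\equiv 0$, which is continuous since $b$ is piecewise continuous and hence bounded. The operator $\mathcal{L}:L^2([0,T];\mathbb{R})\to C([0,T];\mathbb{R}^2)$ is linear and bounded by Cauchy--Schwarz, so $\|y_u\|_\infty\le c_0+c_1\|u\|_{L^2}$. The integrand of $J$ is then a continuous function of $(t,y)$ along $y_u$, apart from the reference signals.

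I will use $r_x,r_g$ bounded on $[0,T]$, so that the linear term $r_x^\top Q y$ and the composition $\Phi(e_2^\top y-r_g)$ are integrable. Assumption~1 gives continuity on $[0,T)$, and $r_x(T)$ is used in the terminal cost. If only continuity on $[0,T)$ is available, I would add a boundedness/integrability remark at this point. With this, $J(u)$ is finite for every $u$.

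\textbf{Step 2 (strict convexity).} Since $\alpha$ is continuous and monotonically increasing, $\Phi(d)=\int_0^d\alpha$ is $C^1$ with nondecreasing derivative, hence convex. Each term of $J$ is convex in $u$, because each is a convex function composed with the affine map $u\mapsto y_u$:
\begin{itemize}
\item $\tfrac12 y^\top Qy$ and the terminal term are convex quadratics, since $Q,Q_T\ge0$;
\item $\Phi(e_2^\top y-r_g)$ is convex, as shown above;
\item $-r_x^\top Qy$ is affine.
\end{itemize}
Integration preserves convexity. The term $\tfrac12\int_0^T u^\top Ru\,dt$ with $R>0$ is strictly convex on $L^2$. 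A convex functional plus a strictly convex one is strictly convex, so $J$ is strictly convex. Uniqueness of a minimizer follows immediately: if $u_1\neq u_2$ were both minimizers, the midpoint would give a strictly smaller value.

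\textbf{Step 3 (coercivity).} By convexity, $\Phi(d)\ge\Phi(0)+\alpha(0)d=\alpha(0)d$. Dropping the nonnegative quadratic terms in $y$ and using Step 1 gives
\begin{equation*}
J(u)\ge \tfrac{R}{2}\|u\|_{L^2}^2-C\big(1+\|u\|_{L^2}\big)
\end{equation*}
for a constant $C$ depending on $T$, $\alpha(0)$, and bounds on $r_x,r_g,Q$. Hence $J(u)\to\infty$ as $\|u\|_{L^2}\to\infty$, so any minimizing sequence is bounded in $L^2$.

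\textbf{Step 4 (existence).} I expect the main obstacle to be lower semicontinuity in the presence of the nonlinear $\Phi$, which is handled as follows. First show that $J$ is continuous in the strong topology. If $u_n\to u$ in $L^2$, then $y_{u_n}\to y_u$ uniformly by Step 1, so the arguments of $\Phi$ stay in a compact set on which $\Phi$ is uniformly continuous. The integrands therefore converge uniformly, the quadratic $u$-term converges, and $J(u_n)\to J(u)$.

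A convex, strongly continuous functional on a Hilbert space is weakly lower semicontinuous, because its sublevel sets are closed and convex, hence weakly closed (Mazur). A bounded minimizing sequence has a weakly convergent subsequence $u_n\rightharpoonup u^*$, since $L^2$ is reflexive. Weak lower semicontinuity then gives $J(u^*)\le\liminf J(u_n)=\inf J$, so $u^*$ is a minimizer, and by Step 2 it is unique.
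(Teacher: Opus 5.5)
Your proposal is correct and follows essentially the same route as the paper. Both arguments use the affine control-to-state map, convexity of $\Phi$ from the monotonicity of $\alpha$, strict convexity from the $R$-term, coercivity via $\Phi(d)\ge\Phi(0)+\alpha(0)d$ together with $\|y\|_C\le C_1\|u\|_{L^2}+C_2$, and then existence of the minimizer in the reflexive space $L^2$.

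The differences are minor. You prove the continuity of $J$ and the Mazur/weak-compactness step, which the paper only asserts and delegates to Brezis's Cor.~3.23. You also correctly flag that boundedness of $r_x,r_g$ on $[0,T]$ is needed; the paper uses this implicitly, but Assumption~1 (continuity only on $[0,T)$) does not by itself guarantee it.
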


\begin{proof}
We first show that $J(u)$ is strictly convex in the control $u\in L^2([0,T];\mathbb{R})$. 
Existence and uniqueness of the minimizer then follow from continuity 
and coercivity of $J(u)$ on the reflexive Banach space $L^2([0,T];\mathbb{R})$.

Since the dynamics \eqref{eq:aux_dynamics} are affine in $(y,u)$, the 
control-to-state map $u(\cdot) \mapsto y(\cdot)$ is affine: for any 
$u_1(\cdot), u_2(\cdot) \in L^2([0,T];\mathbb{R})$ and $\lambda\in(0,1)$, 
the convex combination $u_\lambda(t) = (1-\lambda)u_1(t) + \lambda u_2(t)$ 
produces the state trajectory $y_\lambda(t) = (1-\lambda)y_1(t)+\lambda y_2(t)$ 
for all $t\in[0,T]$, where $y_i(\cdot)$ is the solution of 
\eqref{eq:aux_dynamics} associated with $u_i(\cdot)$, $i=1,2$.

Since $\alpha(\cdot)$ is continuous and nondecreasing 
following Assumption~\ref{ass:price_function}, its primitive 
$\Phi(d)=\int_0^d\alpha(\tau)\,d\tau$ satisfies $\Phi'=\alpha$, and 
since $\alpha$ is nondecreasing, $\Phi'$ is nondecreasing. Therefore, 
$\Phi$ is convex on $\mathbb{R}$. Hence, for each fixed $t\in[0,T]$, 
the mapping $y(t)\mapsto\Phi(e_2^\top y(t)-r_g(t))$ is convex in 
$y(t)\in\mathbb{R}^2$, as the composition of the convex function $\Phi$ 
with the affine map $y(t)\mapsto e_2^\top y(t)-r_g(t)$. Each term in the running cost integrand in \eqref{eq:aux_cost} is jointly 
convex in $(y(t),u(t))\in\mathbb{R}^2\times\mathbb{R}$ for each fixed 
$t\in[0,T]$. Indeed, 
$\frac{1}{2}y(t)^\top Qy(t)$ is convex in $y(t)$ since $Q\geq 0$; 
$\Phi(e_2^\top y(t)-r_g(t))$ is convex in $y(t)$; 
$-r_x(t)^\top Qy(t)$ is affine in $y(t)$; and 
$\frac{1}{2}u(t)^\top Ru(t)$ is strictly convex in $u(t)$ since $R>0$. 
Since a function convex in one component and independent of the other is 
convex on the product space $\mathbb{R}^2\times\mathbb{R}$, the running 
cost integrand is jointly convex in $(y(t),u(t))$.

Using the affine dependence $y_\lambda(t) = (1-\lambda)y_1(t)+\lambda y_2(t)$, 
the joint convexity of the running cost integrand, and integrating over $[0,T]$, 
we obtain $J(u_\lambda) \le (1-\lambda)J(u_1) + \lambda J(u_2).$
Moreover, the mapping $u \mapsto \int_0^T \tfrac{1}{2}u(t)^\top Ru(t)\,dt$ is strictly convex on $L^2([0,T];\mathbb{R})$ since $R>0$, whereas the remaining
terms in $J$ are convex after composition with the affine control-to-state map.
Therefore, for $u_1\neq u_2$,
\[
J(u_\lambda) < (1-\lambda)J(u_1) + \lambda J(u_2),
\]
and hence $J$ is strictly convex on $L^2([0,T];\mathbb{R})$.

Furthermore, one can establish the coercivity of $J(u)$; indeed, we show that $J(u) \to +\infty$ as $\|u(\cdot)\|_{L^2} \to \infty$.

The affine dynamics \eqref{eq:aux_dynamics} admit the explicit solution
\begin{equation*}
y(t) = e^{At}\mathbb{E}[x_{i,0}]
+ \int_0^t e^{A(t-\tau)}\bigl(Bu(\tau)+f(\tau)\bigr)\,d\tau,
\end{equation*}
for all $t \in [0,T]$.
Taking norms and applying the Cauchy--Schwarz inequality, there exist 
constants $C_1, C_2 > 0$, depending on $A$, $B$, $T$, and 
$\mathbb{E}[x_{i,0}]$, such that
\begin{equation}
\|y\|_C \le C_1\|u\|_{L^2} + C_2,
\label{eq:state_bound}
\end{equation}
where $\|y\|_C := \sup_{t \in [0,T]}\|y(t)\|_2$ denotes the uniform norm on $C([0,T];\mathbb{R}^2)$.
We now bound $J(u)$ from below. The terminal cost satisfies $\frac{1}{2}(y(T)-r_x(T))^\top Q_T (y(T)-r_x(T)) \ge 0,$ and $\int_0^T \frac{1}{2}y^\top(t) Qy(t)\,dt \ge 0$ since $Q_T, Q\ge 0$. The control cost satisfies $\int_0^T \frac{1}{2}u^\top(t) Ru(t)\,dt = \frac{R}{2}\|u\|_{L^2}^2$. For the nonlinear term, since $\Phi$ is convex, the first-order inequality gives $\Phi(d) \ge \Phi(0) + \alpha(0)d$ for all $d \in \mathbb{R}$, where $\alpha(0)$ may be positive or negative. Substituting $d(t) = e_2^\top y(t) - r_g(t)$ and using \eqref{eq:state_bound} 
 yields a lower bound linear in $\|u\|_{L^2}$. The linear term $-\int_0^T r_x^\top(t) Qy(t)\,dt$ is similarly bounded from below linearly in $\|u\|_{L^2}$ via Cauchy--Schwarz. Combining all terms, there exist constants $C, D > 0$ such that
\[
J(u) \ge \frac{R}{2}\|u\|_{L^2}^2 - C\|u\|_{L^2} - D.
\]
Since $R > 0$, the quadratic term dominates. This implies that when $\|u\|_{L^2} \to +\infty$,  $J(u)$ diverges to $+\infty$, and $J(u)$ is coercive.

Since $J(u)$ is strictly convex, continuous, and coercive on the reflexive Banach space $L^2([0,T];\mathbb{R})$, standard results in convex analysis guarantee the existence of a unique minimizer $u^* \in L^2([0,T];\mathbb{R})$ \cite[Cor.~3.23]{Brezis2011}, \cite[Ch.~I, Thm.~2.4]{fleming1975deterministic}.
\end{proof}
\begin{remark}
The convexity of the cost functional has been used to
establish the uniqueness of optimal control in \cite{Firoozi2020, 
shrivats2022mean} for linear-quadratic optimal control and in \cite[Ch.~II, Thm.~11.6]{fleming1975deterministic} for linear systems with convex costs. However, these results do not apply directly to our problems.
\end{remark}

Define the Hamiltonian $H: \mathbb{R}^2 \times \mathbb{R} \times 
\mathbb{R}^2 \to \mathbb{R}$ associated with the auxiliary problem 
\eqref{eq:aux_cost}--\eqref{eq:aux_dynamics} by
\begin{multline}
H(t,y,u,\lambda) = \tfrac{1}{2}y^\top Qy - r_x(t)^\top Qy
+ \Phi\!\bigl(e_2^\top y - r_g(t)\bigr) \\
+ \tfrac{1}{2}u^\top Ru + \lambda^\top\!\bigl(Ay + Bu + f(t)\bigr).
\label{eq:hamiltonian}
\end{multline}
By Pontryagin's Minimum Principle \sg{(PMP)}
\cite[Ch.~II, Thm.~5.1]{fleming1975deterministic}, for the unique minimizer $u^*$ of $J(u)$ guaranteed by Lemma 1, there exists a costate trajectory $\lambda^* \in C([0,T];\mathbb{R}^2)$ such that, denoting by $y^*$ the state trajectory of \eqref{eq:aux_dynamics} driven by $u^*$, the triplet $(u^*, y^*, \lambda^*)$ satisfies:
\begin{align}
u^*(t) &= -R^{-1}B^\top\lambda^*(t), \label{eq:pmp_opt} \\
\dot{y}^*(t) &= Ay^*(t) + Bu^*(t) + f(t), ~ y^*(0) = \mathbb{E}[x_{i,0}], \label{eq:pmp_fwd} \\
-\dot{\lambda}^*(t) &= Qy^*(t) - Qr_x(t) + \alpha\!\bigl(e_2^\top y^*(t)-r_g(t)\bigr)e_2 \notag \\
                    &\quad + A^\top\lambda^*(t), \label{eq:pmp_adjoint} \\
\lambda^*(T) &= Q_T\!\bigl(y^*(T)-r_x(T)\bigr). \label{eq:pmp_bwd}
\end{align}
We refer to \eqref{eq:pmp_opt}--\eqref{eq:pmp_bwd} as the \emph{PMP system}.

\begin{lemma}[Uniqueness of the PMP System Solution] \label{lem:pmp} 
Under Assumptions~1 and~2, the PMP system 
\eqref{eq:pmp_opt}--\eqref{eq:pmp_bwd} admits a unique solution 
$(u^*, y^*, \lambda^*)$ with $u^* \in L^2([0,T];\mathbb{R})$, 
$y^* \in C([0,T];\mathbb{R}^2)$, and $\lambda^* \in C([0,T];\mathbb{R}^2)$. 
\end{lemma}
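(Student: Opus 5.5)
Existence is already settled: by Lemma~\ref{lem:aux_existence_uniqueness} there is a unique minimizer $u^*$ of $J$, and the PMP discussion above produces a costate $\lambda^*$ so that $(u^*,y^*,\lambda^*)$ solves \eqref{eq:pmp_opt}--\eqref{eq:pmp_bwd}. So the task is uniqueness. I will show that every solution of the PMP system gives a minimizer of $J$. Strict convexity (Lemma~\ref{lem:aux_existence_uniqueness}) then forces its control component to equal $u^*$. Once $u$ is fixed, $y$ is determined by the forward linear ODE \eqref{eq:pmp_fwd}. Then $\lambda$ is determined by the backward linear ODE \eqref{eq:pmp_adjoint}: its forcing term $\alpha(e_2^\top y(t)-r_g(t))e_2$ is now a known function, and its terminal value \eqref{eq:pmp_bwd} is fixed.

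The core step is sufficiency of the PMP conditions. Let $(\hat u,\hat y,\hat\lambda)$ be any solution of the PMP system, take an arbitrary $u\in L^2([0,T];\mathbb{R})$ with trajectory $y$, and set $\delta y=y-\hat y$ and $\delta u=u-\hat u$. Then $\delta\dot y=A\delta y+B\delta u$ and $\delta y(0)=0$. I will lower-bound each term of $J(u)-J(\hat u)$ by its first-order term.
\begin{itemize}
\item Since $Q\ge0$ and $\Phi$ is convex with $\Phi'=\alpha$, the running cost satisfies $\ell(y,u,t)-\ell(\hat y,\hat u,t)\ge (Q\hat y-Qr_x+\alpha(e_2^\top\hat y-r_g)e_2)^\top\delta y+\hat u^\top R\,\delta u$.
\item The terminal cost satisfies the analogous inequality, with first-order term $(Q_T(\hat y(T)-r_x(T)))^\top\delta y(T)=\hat\lambda(T)^\top\delta y(T)$.
\end{itemize}
Next, using \eqref{eq:pmp_adjoint}, I replace the running gradient $Q\hat y-Qr_x+\alpha(\cdot)e_2$ by $-\dot{\hat\lambda}-A^\top\hat\lambda$. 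I then integrate $-\dot{\hat\lambda}^\top\delta y$ by parts. The boundary terms are $-\hat\lambda(T)^\top\delta y(T)+\hat\lambda(0)^\top\delta y(0)$; the first cancels the terminal gradient term and the second vanishes because $\delta y(0)=0$. What remains is the integral of $\hat\lambda^\top(\delta\dot y-A\delta y)=\hat\lambda^\top B\delta u$. Altogether,
\[
J(u)-J(\hat u)\ \ge\ \int_0^T \big(R\hat u+B^\top\hat\lambda\big)^\top\delta u\,dt = 0
\]
by \eqref{eq:pmp_opt}, so $\hat u$ minimizes $J$.

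The main technical obstacle is justifying the integration by parts and the regularity of the solution components. The key point is that $\hat\lambda$ is absolutely continuous. From \eqref{eq:pmp_opt}, $\hat u=-R^{-1}B^\top\hat\lambda$ is continuous, so $\hat y$ is $C^1$. Then the right-hand side of \eqref{eq:pmp_adjoint} is continuous, because $\alpha$ is continuous and $r_x,r_g$ are continuous by Assumption~\ref{ass:reference_signals}; hence $\hat\lambda$ is $C^1$ wherever that right-hand side is continuous, and is absolutely continuous in any case. Continuity of $\alpha$ also makes the chain rule for $\Phi$ hold pointwise, so the convexity inequality for $\Phi$ is legitimate. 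A minor additional point is that $r_x$ is only assumed continuous on $[0,T)$, so I will treat $r_x(T)$ as a given constant in the terminal condition; this does not affect the argument.

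Finally, combining the steps: $\hat u=u^*$ by uniqueness of the minimizer, and then $\hat y=y^*$ and $\hat\lambda=\lambda^*$ by uniqueness of solutions to the two linear ODEs.
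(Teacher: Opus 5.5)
Your proposal is correct and takes the same route as the paper: you show that any solution of the PMP system minimizes $J$, use strict convexity from Lemma~\ref{lem:aux_existence_uniqueness} to force $u=u^*$, and then use uniqueness for the forward state ODE and the backward adjoint ODE to fix $y$ and $\lambda$. The only difference is that you prove the sufficiency step explicitly, via the convexity inequality and integration by parts against the adjoint, whereas the paper identifies \eqref{eq:pmp_opt} with first-order stationarity of $J$ and cites the convex sufficiency theorem; one small correction is that, since $b$ is only piecewise continuous, $\hat y$ is piecewise $C^1$ rather than $C^1$, though absolute continuity is all your integration by parts needs.
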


\begin{proof}
Let $(u, y, \lambda)$ be any solution of the PMP system 
\eqref{eq:pmp_opt}--\eqref{eq:pmp_bwd}. For the cost 
\eqref{eq:aux_cost}, the optimality condition \eqref{eq:pmp_opt} is the 
first-order stationarity condition of $J$ with respect to $u$. Since $J$ 
is convex on $L^2([0,T];\mathbb{R})$ by 
Lemma~\ref{lem:aux_existence_uniqueness}, this is sufficient for $u$ to 
be a global minimizer of $J$ 
\cite[Ch.~I, Thm.~2.3]{fleming1975deterministic}. Since $J$ is strictly 
convex, the minimizer is unique, and hence $u = u^*$ 
\cite[Ch.~I, Thm.~2.4]{fleming1975deterministic}.

With $u = u^*$ fixed, the state equation \eqref{eq:pmp_fwd} is a linear 
ODE with a prescribed initial condition, and therefore admits a unique 
solution, giving $y = y^*$. Given $y^*$, the adjoint equation \eqref{eq:pmp_adjoint} with terminal 
condition \eqref{eq:pmp_bwd} is a linear ODE with continuous coefficients 
and a prescribed terminal condition, and therefore admits a unique 
solution, giving $\lambda = \lambda^*$.

Therefore, the PMP system \eqref{eq:pmp_opt}--\eqref{eq:pmp_bwd} admits 
exactly one solution triplet $(u^*, y^*, \lambda^*)$.
\end{proof}

\begin{theorem}[Existence \& Uniqueness - TPBVP] \label{thm:tpbvp_equivalence}
Under Assumptions~1 and~2, the TPBVP \eqref{eq:mfgs_adjoint}--\eqref{eq:mfgs_mean_initial} admits a unique solution pair $(\bar{x}, s)$.
\end{theorem}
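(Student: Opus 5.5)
The plan is to set up a bijection between solutions of the TPBVP \eqref{eq:mfgs_adjoint}--\eqref{eq:mfgs_mean_initial} and solutions of the PMP system \eqref{eq:pmp_opt}--\eqref{eq:pmp_bwd}. Existence and uniqueness of the latter are given by Lemma~\ref{lem:aux_existence_uniqueness}, the PMP, and Lemma~\ref{lem:pmp}. Throughout, $P(\cdot)$ denotes the unique solution of the standard Riccati equation \eqref{eq:mfgs_riccati_P}. This solution exists on all of $[0,T]$ and is positive semidefinite because $Q\ge 0$, $Q_T\ge 0$, and $R>0$ (standard LQ theory). It does not depend on $\alpha$. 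The bridge between the two systems is the affine change of variables
\[
\lambda(t) = P(t)\bar{x}(t) + s(t), \qquad y(t)=\bar{x}(t).
\]

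\emph{Existence.} Take the unique PMP solution $(u^*,y^*,\lambda^*)$ and define $\bar{x}:=y^*$ and $s:=\lambda^*-Py^*$.

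First, the terminal condition. From \eqref{eq:pmp_bwd} and $P(T)=Q_T$,
\[
s(T)=Q_T(y^*(T)-r_x(T))-Q_Ty^*(T)=-Q_Tr_x(T),
\]
which is \eqref{eq:mfgs_adjoint_terminal}.

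Second, the forward equation. Using $u^*=-R^{-1}B^\top\lambda^*=-R^{-1}B^\top(P\bar{x}+s)$, equation \eqref{eq:pmp_fwd} becomes exactly \eqref{eq:mfgs_mean_forward}, with the initial condition \eqref{eq:mfgs_mean_initial}.

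Third, the backward equation. Differentiate $s=\lambda^*-Py^*$ and substitute \eqref{eq:pmp_adjoint}, \eqref{eq:mfgs_riccati_P} and \eqref{eq:pmp_fwd}. This gives
\begin{align*}
-\dot s &= Qy^*-Qr_x+\alpha(e_2^\top y^*-r_g)e_2+A^\top\lambda^* \\
&\quad +(A^\top P+PA-PBR^{-1}B^\top P+Q)y^* - Q y^* \\
&\quad +P\bigl(Ay^*-BR^{-1}B^\top\lambda^*+f\bigr).
\end{align*}
The sign bookkeeping must be done carefully, since $+\dot P y^*$ enters with a minus sign. After substituting $\lambda^*=Py^*+s$, the terms quadratic and linear in $y^*$ should cancel by the Riccati identity. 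What remains is
\[
-\dot s=(A^\top-PBR^{-1}B^\top)s+Pf+\alpha(e_2^\top\bar{x}-r_g)e_2-Qr_x,
\]
which is \eqref{eq:mfgs_adjoint}. This algebraic verification is the main step, and it is the place where an error could hide: the $Qy^*$ terms have to cancel exactly, as does the $A^\top P$ versus $PA$ pairing.

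Regularity is inherited: $\bar{x}$ and $s$ are continuous because $y^*$, $\lambda^*$, and $P$ are continuous. Hence $(\bar{x},s)$ solves the TPBVP.

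\emph{Uniqueness.} Let $(\bar{x},s)$ be any continuous solution of the TPBVP and define
\[
y:=\bar{x},\qquad \lambda:=P\bar{x}+s,\qquad u:=-R^{-1}B^\top\lambda .
\]
Reversing the computation above, $(u,y,\lambda)$ satisfies \eqref{eq:pmp_opt}--\eqref{eq:pmp_bwd}, and $u\in L^2$ since it is continuous. By Lemma~\ref{lem:pmp}, $(u,y,\lambda)=(u^*,y^*,\lambda^*)$. It follows that $\bar{x}=y^*$ and $s=\lambda^*-Py^*$ are uniquely determined.

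The only subtlety is the regularity class. Piecewise continuity of $b$, and possibly of $r_x$ near $T$, means the ODEs hold almost everywhere, or in integral form. This is compatible with the Carathéodory sense used in the PMP, so the same change of variables works in integrated form.
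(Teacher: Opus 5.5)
Your argument is the same as the paper's: the change of variables $\bar{x}=y$, $s=\lambda-Py$ (with inverse $\lambda=P\bar{x}+s$) maps the unique PMP triplet from Lemmas~\ref{lem:aux_existence_uniqueness}--\ref{lem:pmp} to a TPBVP solution, and maps any TPBVP solution back to a PMP solution, which gives existence and uniqueness. One sign slip needs fixing in your displayed computation: $-\dot s$ contains $+\dot P y^*$ and $\dot P=-(A^\top P+PA-PBR^{-1}B^\top P+Q)$, so that line should read $-(A^\top P+PA-PBR^{-1}B^\top P+Q)y^*$ with no separate $-Qy^*$, and with this sign the $y^*$-terms cancel and your stated final equation follows.
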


\begin{proof}
Let $P(t)$ be the unique solution of \eqref{eq:mfgs_riccati_P}. Define 
the linear map $\mathcal{L}: (y,\lambda) \mapsto (\bar{x}, s)$ by
\[
\bar{x}(t) = y(t), \qquad s(t) = \lambda(t) - P(t)y(t),
\]
with inverse $\mathcal{L}^{-1}: (\bar{x}, s) \mapsto (y, \lambda)$ given by 
$y(t) = \bar{x}(t)$, $\lambda(t) = P(t)\bar{x}(t) + s(t)$.

Let $(u^*, y^*, \lambda^*)$ denote the optimal control, optimal state, 
and costate trajectory guaranteed by Lemma~\ref{lem:aux_existence_uniqueness} and Lemma~\ref{lem:pmp}. Set $s(t) := \lambda^*(t) - P(t)y^*(t)$.
Differentiating this relation and substituting the costate equation 
\eqref{eq:pmp_adjoint}, the Riccati equation \eqref{eq:mfgs_riccati_P}, and 
the state equation \eqref{eq:pmp_fwd} with $u^*(t) = -R^{-1}B^\top\lambda^*(t)$ 
yields the adjoint equation \eqref{eq:mfgs_adjoint}. The terminal condition 
$s(T) = -Q_T r_x(T)$ follows from $P(T) = Q_T$ and \eqref{eq:pmp_bwd}. Substituting 
$\lambda^* = Py^* + s$ into the state equation recovers the forward equation 
\eqref{eq:mfgs_mean_forward}. Hence $(\bar{x}^*, s^*) = \mathcal{L}(y^*, \lambda^*)$ 
satisfies the TPBVP.

Conversely, let $(\bar{x}, s)$ solve the 
TPBVP. Define $\lambda(t) := P(t)\bar{x}(t) + s(t)$ and 
$u(t) := -R^{-1}B^\top\lambda(t)$. Differentiating $\lambda$ and substituting 
the Riccati equation \eqref{eq:mfgs_riccati_P} and the TPBVP equations 
recovers the costate equation \eqref{eq:pmp_adjoint}, while the forward TPBVP 
equation \eqref{eq:mfgs_mean_forward} directly gives \eqref{eq:pmp_fwd}. 
Hence $\mathcal{L}^{-1}(\bar{x}, s)$ satisfies the PMP conditions of 
Lemma~\ref{lem:pmp}.

Since $\mathcal{L}$ is bijective with explicit inverse, it is an isomorphism. 
Lemma~\ref{lem:aux_existence_uniqueness} guarantees a unique optimal 
control $u^*$, and Lemma~\ref{lem:pmp} gives the unique triplet 
$(u^*, y^*, \lambda^*)$,
which $\mathcal{L}$ maps to a unique TPBVP pair 
$(\bar{x}^*, s^*)$, establishing both existence and uniqueness.
\end{proof}

\section{Mean Field-Affine Pricing}\label{sec:affine_pricing}

A  simplification of the MFG equilibrium is possible when the price  is specialized to an affine function.

\begin{assumption}
\label{ass:affine_price}
The price function $\alpha: \mathbb{R} \to \mathbb{R}$ takes the form
\begin{equation*}
    \alpha(d) = c_1 d + c_0, \quad \forall d \in \mathbb{R},\quad \text{with } c_1>0,~  c_0 \in \mathbb{R}.
\end{equation*}
\end{assumption}

The coefficient $c_1 > 0$ ensures that price function is  monotonically increasing as  required in Assumption \ref{ass:price_function} and $c_0 \in \mathbb{R}$ is a constant offset representing a baseline price.

\subsection{Decoupling TPBVP}
To decouple the TPBVP  \eqref{eq:mfgs_adjoint}--\eqref{eq:mfgs_mean_forward}, we 
follow the standard approach for LQ-MFGs (see e.g. \cite[p.~525]
{Bensoussan2016}) by introducing an affine transformation from the mean state to the costate 
\begin{equation}\label{decoupling_ansatz}
s(t) = \Pi(t)\bar{x}(t) + \beta(t) .
\end{equation}
Inserting \eqref{decoupling_ansatz} into equations \eqref{eq:mfgs_adjoint} and \eqref{eq:mfgs_mean_forward}, and differentiating with respect to time, we obtain $\dot{s}(t) = \dot{\Pi}(t)\bar{x}(t) + \Pi(t)\dot{\bar{x}}(t) + \dot{\beta}(t).$ Substituting this expression into equation \eqref{eq:mfgs_adjoint} and matching coefficients of $\bar{x}(t)$ and the constant terms yields two decoupled ordinary differential equations (ODEs) for $\Pi(t)$ and $\beta(t)$ as follows
\begin{equation}\label{eq:mfgs_riccati_Pi}
\begin{aligned}
-\dot{\Pi}(t)
&= \Pi(t)A_{\mathrm{cl}}(t) + A_{\mathrm{cl}}(t)^\top\Pi(t)\\[0.5ex]
&\quad
- \Pi(t)BR^{-1}B^\top\Pi(t)  + c_1 e_2 e_2^\top,
\quad
\Pi(T)=0.
\end{aligned}
\end{equation}
\begin{equation}\label{eq:mfgs_ode_beta}
\begin{aligned}
-\dot{\beta}(t)
&= \big(A_{\mathrm{cl}}(t)^\top - \Pi(t)BR^{-1}B^\top\big)\beta(t) \\[0.5ex]
&\quad
+ \big(P(t) + \Pi(t)\big)f(t) - Qr_x(t) - c_1 r_g(t)e_2 + c_0e_2 \\[0.5ex]
&
\beta(T) = -Q_T r_x(T),
\end{aligned}
\end{equation}
where $A_{\mathrm{cl}}(t)$ is the closed-loop matrix defined in \eqref{eq:closed_loop_matrix}.
Substituting $s(t)$ in
 \eqref{eq:mfgs_mean_forward} by the affine transformation in \eqref{decoupling_ansatz} yields
\begin{equation}\label{eq:mfgs_ode_xbar}
\begin{aligned}
\dot{\bar{x}}(t)
&= \big(A_{\mathrm{cl}}(t) - BR^{-1}B^\top\Pi(t)\big)\bar{x}(t) \\[0.5ex]
&\quad - BR^{-1}B^\top\beta(t) + f(t),
\qquad
\bar{x}(0)=\mathbb{E}[x_{i,0}].
\end{aligned}
\end{equation}
To further simplify the solution, we follow the idea in \cite{ZhuGao2026, GaoMalhame2025} to introduce $ \Omega(t) = P(t) + \Pi(t).$
Summing both sides of equation \eqref{eq:mfgs_riccati_P} and equation \eqref{eq:mfgs_riccati_Pi}, we obtain 
\begin{equation}\label{eq:riccati_Omega}
\begin{aligned}
-\dot{\Omega}(t)
&= A^\top\Omega(t) + \Omega(t)A - \Omega(t)BR^{-1}B^\top\Omega(t) \\
&\quad + Q + c_1 e_2 e_2^\top, \quad \Omega(T) = Q_T.
\end{aligned}
\end{equation}
\begin{remark}

The Riccati equations for both $P(t)$ and $\Omega(t)$ are standard, and admit unique and bounded solutions on $[0, T]$, since $c_1 > 0$ in Assumption~\ref{ass:affine_price} and $Q\geq0$ ensure that the effective weighting matrix $Q_{\mathrm{eff}} = Q + c_1 e_2 e_2^\top$ is positive semi-definite. Consequently, $\Pi(t) = \Omega(t) - P(t)$ is also uniquely defined for all $t \in [0,T]$, without additional restrictions on the coupling strength $c_1$ or time horizon $T$.
\end{remark}
Therefore, the MFG strategy for a generic agent $i$  is equivalently given by the feedback law 
$
u_i^\star(t) = \gamma^\star(t, x_i^\star(t)), $
with
\begin{equation}
    \gamma^\star(t,z) = -R^{-1}B^\top\big(P(t)z +  (\Omega(t)-P(t))\bar{x}(t) + \beta(t)\big),
    \label{eq:app_optimal_control-simple}
\end{equation}
where $\Omega$ is given by \eqref{eq:riccati_Omega}, $P$ is given by \eqref{eq:mfgs_riccati_P}, and
$\beta$ and $\bar{x}$ are respectively given by
\begin{equation}\label{eq:mfgs_ode_beta-simple}
\begin{aligned}
-\dot{\beta}(t)
&= \big(A^\top - \Omega(t)BR^{-1}B^\top\big)\beta(t) \\
&\quad + \Omega(t)f(t) - Qr_x(t) - c_1 r_g(t)e_2 + c_0e_2, \\
&\quad \beta(T) = -Q_T r_x(T).
\end{aligned}
\end{equation}
\begin{equation}\label{eq:mfgs_ode_xbar-simple}
\begin{aligned}
\dot{\bar{x}}(t)
&= \big(A - BR^{-1}B^\top\Omega(t)\big)\bar{x}(t) \\[0.5ex]
&\quad - BR^{-1}B^\top\beta(t) + f(t),
\qquad
\bar{x}(0)=\mathbb{E}[x_{i,0}].
\end{aligned}
\end{equation}

\begin{remark}
    The impact of the grid reference $r_g$ and the baseline price $c_0$ on the control is through $\beta$. The impact of the price sensitivity $c_1$ on the control is through $\Omega$, $\beta$, and $\bar{x}$.
\end{remark}

\begin{proposition}
\label{prop:existence-uniqueness-2}
Under
Assumptions~\ref{ass:reference_signals}
and \ref{ass:affine_price}, the MFG equilibrium strategy exists and is uniquely given by \eqref{eq:app_optimal_control-simple} together with \eqref{eq:mfgs_riccati_P}, \eqref{eq:riccati_Omega}, \eqref{eq:mfgs_ode_beta-simple}, and \eqref{eq:mfgs_ode_xbar-simple}, for any positive coupling strength $c_1 > 0$ and any time horizon $T > 0$.
\end{proposition}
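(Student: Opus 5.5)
Under Assumption~\ref{ass:affine_price}, the affine price $\alpha(d)=c_1 d+c_0$ with $c_1>0$ is continuous and monotonically increasing. Hence Assumption~\ref{ass:price_function} holds, and Theorem~\ref{thm:tpbvp_equivalence} gives a unique solution pair $(\bar x,s)$ of the TPBVP \eqref{eq:mfgs_adjoint}--\eqref{eq:mfgs_mean_initial}. The plan is to show that the decoupled system built from $P$, $\Omega$, $\beta$, $\bar x$ produces a solution of this TPBVP. Uniqueness of the TPBVP solution then identifies the two. Since the feedback law \eqref{eq:mfg_optimal_control} is determined by $(P,s)$, the law \eqref{eq:app_optimal_control-simple} is then exactly the (unique) MFG strategy.

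\textbf{Step 1 (well-posedness of the decoupled system).} The matrix $Q+c_1e_2e_2^\top$ is positive semidefinite because $Q\ge0$ and $c_1>0$, and $Q_T\ge0$, $R>0$. Standard LQ Riccati theory \cite[Ch.~6]{YongZhou1999} then gives unique, bounded, continuous positive semidefinite solutions $P$ of \eqref{eq:mfgs_riccati_P} and $\Omega$ of \eqref{eq:riccati_Omega} on all of $[0,T]$; no finite escape can occur because of the nonnegativity and the a priori upper bound by the value of the zero control. With $\Omega$ in hand, \eqref{eq:mfgs_ode_beta-simple} is a linear backward ODE in $\beta$ with continuous coefficients, using Assumption~\ref{ass:reference_signals} and the piecewise continuity of $b$. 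It therefore has a unique solution. Given $\beta$, \eqref{eq:mfgs_ode_xbar-simple} is a linear forward ODE with a unique solution $\bar x$.

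\textbf{Step 2 (verification).} Set $\Pi:=\Omega-P$ and $s:=\Pi\bar x+\beta$.
\begin{itemize}
\item Subtracting \eqref{eq:mfgs_riccati_P} from \eqref{eq:riccati_Omega} shows that $\Pi$ satisfies \eqref{eq:mfgs_riccati_Pi} with $\Pi(T)=0$. This uses $A_{\mathrm{cl}}=A-BR^{-1}B^\top P$ and expanding $\Omega BR^{-1}B^\top\Omega - PBR^{-1}B^\top P$.
\item Rewriting \eqref{eq:mfgs_ode_beta-simple} with $\Omega=P+\Pi$ gives exactly \eqref{eq:mfgs_ode_beta}. Likewise, \eqref{eq:mfgs_ode_xbar-simple} is \eqref{eq:mfgs_ode_xbar}, which is \eqref{eq:mfgs_mean_forward} after substituting $s=\Pi\bar x+\beta$.
\item Differentiating $s$ gives $\dot s=\dot\Pi\bar x+\Pi\dot{\bar x}+\dot\beta$. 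Inserting the three equations above, the $\bar x$-terms collapse, via the $\Pi$ Riccati equation, to $-(A^\top-PBR^{-1}B^\top)\Pi\bar x - c_1e_2e_2^\top\bar x$. The remaining terms give $-(A^\top-PBR^{-1}B^\top)\beta - Pf + Qr_x + c_1r_ge_2 - c_0e_2$.
\item Summing, $-\dot s=(A^\top-PBR^{-1}B^\top)s+Pf+(c_1(e_2^\top\bar x-r_g)+c_0)e_2-Qr_x$, which is \eqref{eq:mfgs_adjoint} for affine $\alpha$. The terminal condition is $s(T)=\Pi(T)\bar x(T)+\beta(T)=-Q_Tr_x(T)$.
\end{itemize}
So $(\bar x,s)$ solves the TPBVP.

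\textbf{Step 3 (conclusion).} By the uniqueness in Theorem~\ref{thm:tpbvp_equivalence}, this $(\bar x,s)$ is the only TPBVP solution. Substituting $s=(\Omega-P)\bar x+\beta$ into \eqref{eq:mfg_optimal_control} yields \eqref{eq:app_optimal_control-simple}. Optimality holds because, for the given $\bar x_2$, the best response is the LQ feedback derived from the HJB equation \eqref{eq:hjb}, which is verified by the quadratic ansatz since $P$ exists globally. Consistency holds because \eqref{eq:mean_ode} coincides with \eqref{eq:mfgs_mean_forward}. Existence and uniqueness of the equilibrium therefore hold for every $c_1>0$ and every $T>0$.

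The main obstacle is Step 2's algebra: the $\bar x$-coefficient must cancel exactly, including the $\Pi BR^{-1}B^\top\Pi$ term that appears both from $\Pi\dot{\bar x}$ and in the $\Pi$ Riccati equation. I would check this by grouping the terms by $\Pi$-order. A secondary point is to justify global existence of $\Omega$ in Step 1 by citing the standard comparison (monotonicity) argument for Riccati equations with nonnegative weights.
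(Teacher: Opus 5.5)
Your proposal is correct. Its Step~1 is the paper's entire proof: the paper notes that the Riccati equations for $P$ and $\Omega$ have unique bounded positive semidefinite solutions (because $Q+c_1e_2e_2^\top\ge 0$), that the linear ODEs for $\beta$ and $\bar x$ are then uniquely solvable, and concludes that \eqref{eq:app_optimal_control-simple} is uniquely determined. The link to the TPBVP is left to the coefficient-matching derivation before the proposition. The proof never explicitly rules out an equilibrium whose costate is not of the form $s=\Pi\bar x+\beta$.

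Your Step~2 runs that derivation in the verification direction, and the algebra checks out. In $-\dot s$, the $\bar x$-terms reduce to $A_{\mathrm{cl}}^\top\Pi\bar x+c_1e_2e_2^\top\bar x$, and the remaining terms reduce to $A_{\mathrm{cl}}^\top\beta+Pf-Qr_x-c_1r_ge_2+c_0e_2$. Your Step~3 then gets uniqueness of the equilibrium itself, not just of the decoupled ODE system, from Theorem~\ref{thm:tpbvp_equivalence}. That theorem applies because the affine price satisfies Assumption~\ref{ass:price_function}. This makes the uniqueness claim precise, at the cost of depending on the auxiliary convex-control machinery.

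In the affine case you could also obtain uniqueness self-containedly. Take the difference $(\delta\bar x,\delta s)$ of two TPBVP solutions and set $w:=\delta s-\Pi\delta\bar x$. Then $w$ satisfies $-\dot w=(A_{\mathrm{cl}}^\top-\Pi BR^{-1}B^\top)w$ with $w(T)=0$, using $\Pi(T)=0$, so $w\equiv 0$. Consequently $\delta\bar x$ solves a homogeneous forward linear ODE with $\delta\bar x(0)=0$ and also vanishes.
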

\begin{proof}
    The Riccati equations for $P(t)$ and $\Omega(t)$ admit unique, bounded, and positive semi-definite solutions in $C_1([0, T]; \mathbb{R}^{2\times2})$ by standard results in linear-quadratic optimal control theory \cite[ch.~6, Cor.~2.10]{YongZhou1999}.
    With $P(t)$ and $\Omega(t)$ uniquely determined, the coefficients in the linear ODEs for $\beta(t)$ and $\bar{x}(t)$ are bounded and continuous, and hence unique and bounded solutions for $\beta(t)$ and $\bar{x}(t)$ exist on $C_1([0, T];\mathbb{R}^2)$ by standard ODE theory \cite[Corollary~2.6]{Teschl2012}.
    Hence the MFG strategy in \eqref{eq:app_optimal_control-simple} is uniquely determined. 
    This completes the proof.
\end{proof}

\begin{remark}
In general LQ-MFG problems, the existence and uniqueness of an equilibrium are typically guaranteed only under additional assumptions, such as a sufficiently small coupling parameter or a sufficiently short time horizon (see e.g. \cite[Thm. 3.3]{Bensoussan2016}, \cite{HuangCainesMalhame2007}).
In contrast, in our current work, no additional assumptions on the Riccati equation are needed for the existence and uniqueness of the MFG equilibrium. This difference arises because the cost for an individual agent in \eqref{eq:limit-cost} does not contain the penalty of quadratic mean field tracking error as in the LQ-MFG literature \cite{Bensoussan2016,HuangCainesMalhame2007}. 
\end{remark}

\section{Simulation Results}
\label{sec:simulation}

We present a numerical study of an overnight charging scenario for $N=200$ electric
vehicles to validate the theoretical framework. Two price functions $\alpha(d)$,
where $d(t)=\bar{x}_2(t)-r_g$, are considered in this study and illustrated in
Fig.~\ref{fig:price_functions}:
\begin{equation}
\alpha_{\mathrm{sig}}(d) = \frac{d_{\max}}{1+e^{-ad}},
\qquad
\alpha_{\mathrm{aff}}(d) = c_1 d + c_0.
\label{eq:price_functions}
\end{equation}
Both functions satisfy Assumption~2. Each price structure is examined under two cost
scenarios: $Q=0$ and $Q \neq 0$.

\begin{figure}[t]
\centering
\includegraphics[width=0.98\linewidth]{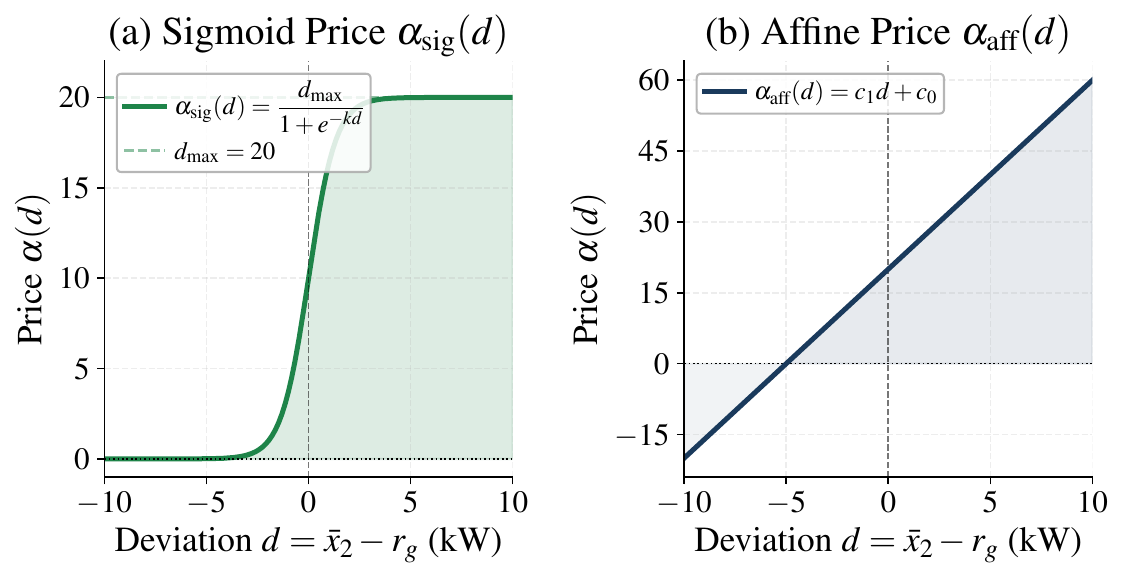}
\caption{Price functions as a function of the deviation
$d=\bar{x}_2-r_g$: (a)~sigmoid price $\alpha_{\mathrm{sig}}$;
(b)~affine price $\alpha_{\mathrm{aff}}$ with $c_0=20$.}
\vspace{-0.65 cm}
\label{fig:price_functions}
\end{figure}

\subsection{Simulation Setup}

Each vehicle has a battery capacity of $60\,\mathrm{kWh}$ and begins with a random
initial SOC drawn uniformly from $[18, 30]\,\mathrm{kWh}$, with zero initial charging
power. The individual cost targets are $r_{x,1}=54\,\mathrm{kWh}$ (90\% SOC) and
$r_{x,2}=9.6\,\mathrm{kW}$, with a terminal power reference $r_{x,2}(T)=0$. The grid
target is $r_g = 5\,\mathrm{kW}$; \textit{this is selected as an approximate value consistent with the
energy required for the mean agent to reach its terminal SOC over the horizon, based on
the average initial SOC.} The stochastic dynamics are integrated
via Euler--Maruyama with $\Delta t=0.005\,\mathrm{h}$. Key parameters are listed in
Table~\ref{tab:parameters}.

\begin{table}[h!]
\centering
\caption{Simulation Parameters}
\label{tab:parameters}
\begin{tabular}{llll}
\hline
\textbf{Parameter} & \textbf{Value} & \textbf{Parameter} & \textbf{Value} \\
\hline
$N$               & 200          & $\sigma_1$          & 0.5 \\
$T$               & 8 h          & $\sigma_2$          & 0.25 \\
$E_{\mathrm{cap}}$& 60 kWh       & $Q$                 & $\mathrm{diag}(0.5,\;2.5)$ \\
$\kappa$          & 0.9          & $R$                 & 0.10 \\
$r_{x,1}$         & 54 kWh       & $Q_T$               & $\mathrm{diag}(60,\;1)$ \\
$r_{x,2}$         & 9.6 kW       & $r_{x,2}(T)$        & 0 kW \\
$r_g$             & 5.0 kW       & $\Delta t$          & 0.005 h \\
$c_1$             & 4            & $a$                 & 1.5 \\
$c_0$             & 20           & $d_{\max}$          & 20 \\
\hline
\end{tabular}
\end{table}
\begin{figure}[htb!]
\centering
\includegraphics[width=0.98\linewidth]{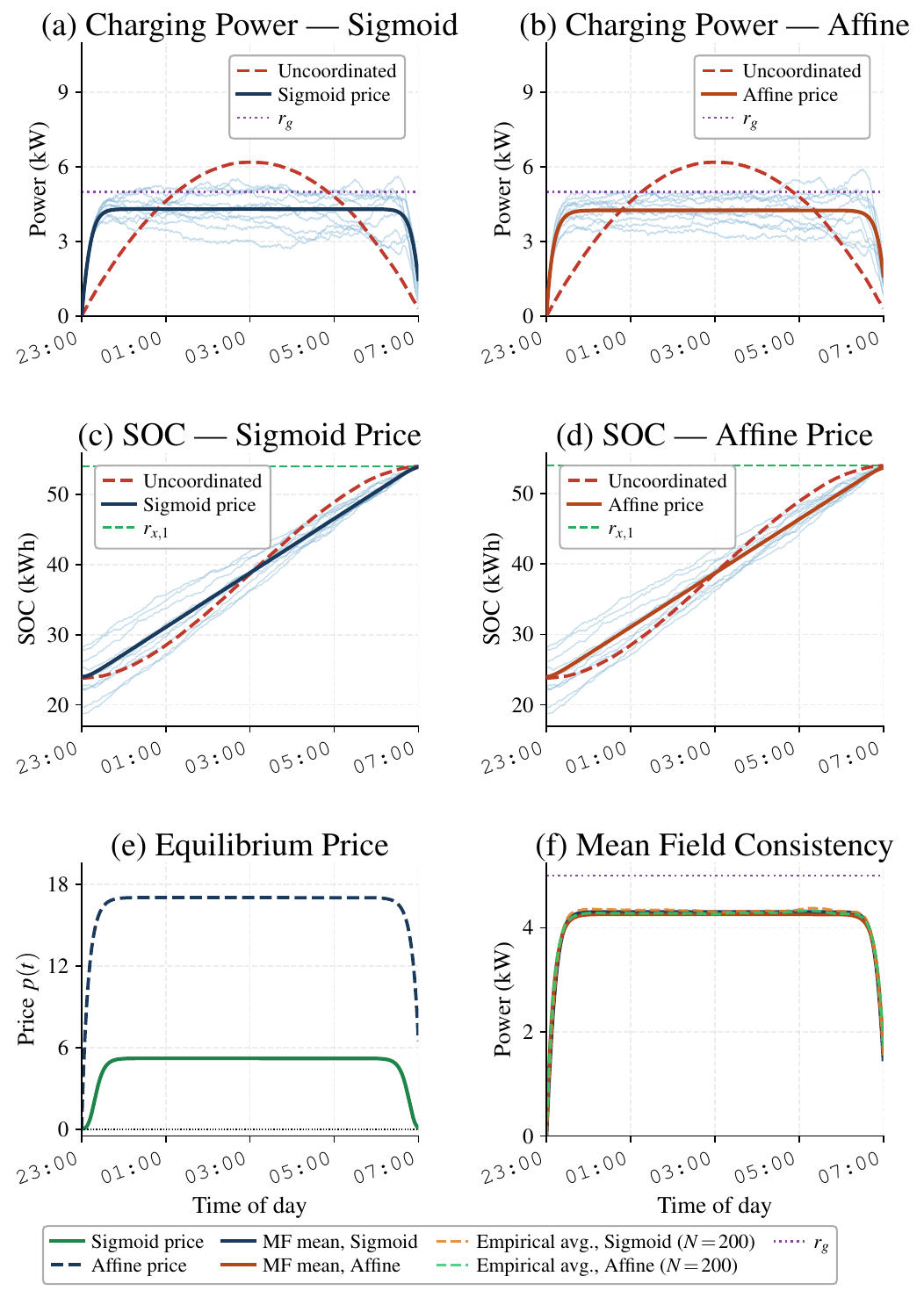}
\caption{MFG equilibrium results for $Q = 0$.
(a)--(b) Optimal charging power under sigmoid and affine price coordination (solid),
and uncoordinated profile (red dashed).
(c)--(d) Corresponding SOC trajectories.
(e) Equilibrium price signals. (f) Mean field consistency verification.}
\vspace{-0.4 cm}
\label{fig:q0_results}
\end{figure}
\subsection{Results: Pure Price Coordination ($Q=0$)}

We first consider the case $Q=0$, in which the quadratic 
state-tracking term is absent and the equilibrium is governed 
by the price signal and the control effort penalty. The results 
are shown in Fig.~\ref{fig:q0_results}.
In both the affine and sigmoid cases, the mean charging power converges to a near-constant plateau, as the price signal drives $\bar{x}_2(t)$ toward $r_g$ with no state-tracking penalty. In contrast, the uncoordinated strategy produces a bell-shaped 
power profile that peaks well above $r_g$, illustrating the 
benefit of price coordination for grid compliance.
The corresponding SOC trajectories in panels~(c)--(d) are approximately linear,
consistent with the near-constant charging rate, and all agents converge to the terminal
target $r_{x,1}$. Regarding the equilibrium price in panel~(e), the sigmoid price
maintains a low, nearly constant level close to $r_g$, while the affine price holds a
significantly higher plateau, reflecting the larger linear penalty imposed by
$\alpha_{\mathrm{aff}}$ when the mean field deviation is non-negligible. The empirical
mean $\bar{x}_2^N(t)$ closely tracks the theoretical mean field $\bar{x}_2(t)$ in
panel~(f), validating the mean field approximation.

\begin{figure}[htb!]
\centering
\includegraphics[width=0.98\linewidth]{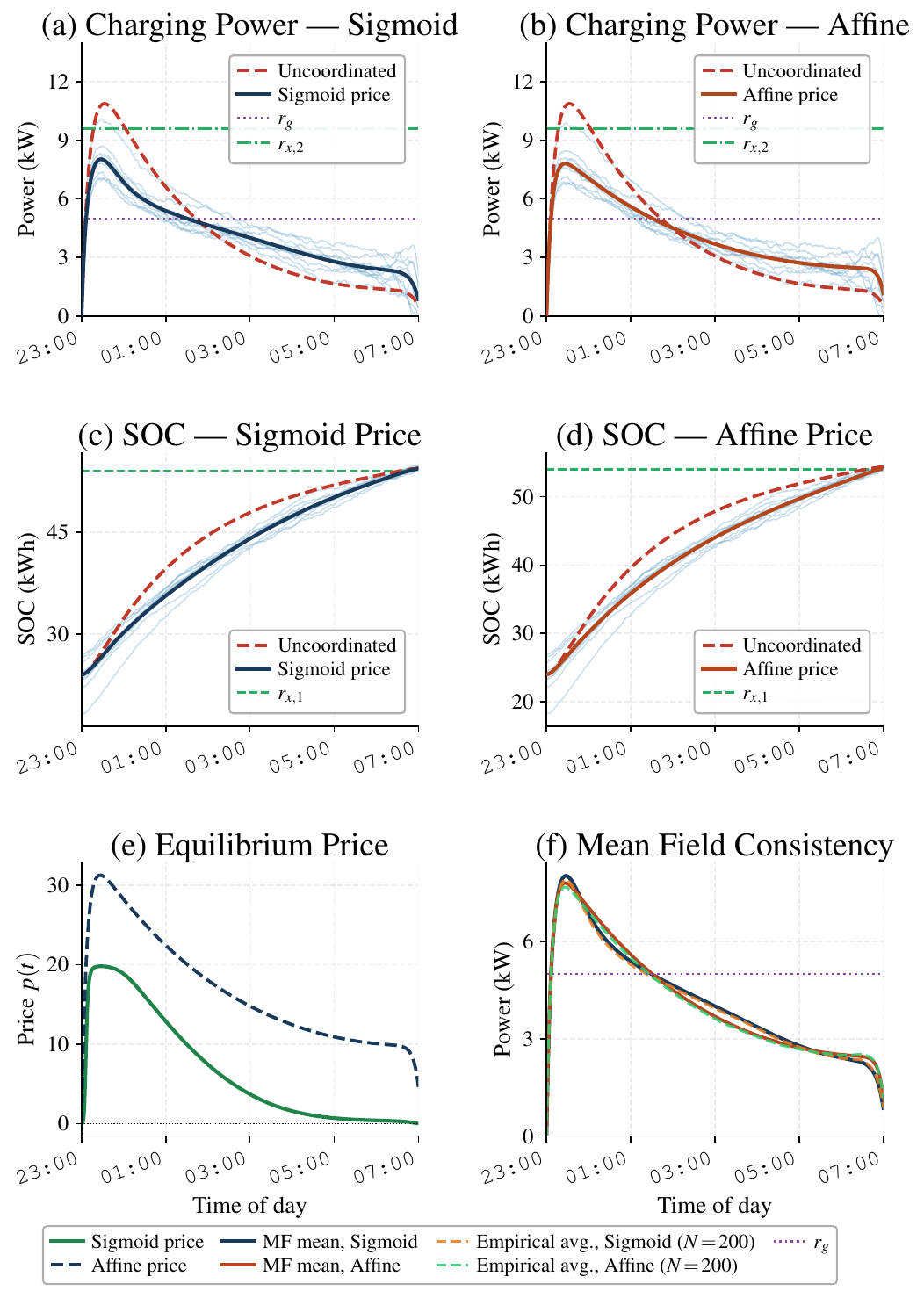}
\caption{MFG equilibrium results for $Q \neq 0$.
(a)--(b) Optimal charging power under sigmoid and affine price coordination (solid),
with individual trajectories (light blue) and uncoordinated baseline (red dashed).
(c)--(d) Corresponding SOC trajectories.
(e) Equilibrium price signals. (f) Mean field consistency verification.}
\vspace{-0.5 cm}
\label{fig:q_nonzero_results}
\end{figure}
\subsection{Results: Price and Running Cost Coordination ($Q \neq 0$)}

We next include a running cost ($Q=\mathrm{diag}(0.5, 2.5)$) that penalizes deviations
from the reference trajectory $(r_{x,1}, r_{x,2})$. As shown in
Fig.~\ref{fig:q_nonzero_results}, this additional cost fundamentally alters the shape
of the equilibrium trajectory from flat to curved: the mean charging power rises
sharply at the start of the session, peaks near the beginning of the horizon, and then
decays gradually as agents approach their terminal targets. The uncoordinated baseline
again exhibits a higher and broader peak, underscoring the role of the price signal in
redistributing the charging load. SOC trajectories in panels~(c)--(d) are concave and converge to $r_{x,1}$, consistent with the decreasing charging rate. Individual trajectory spread in panels~(a)--(d) reflects the stochastic noise in the dynamics ($\sigma_1 = 0.5$, $\sigma_2 = 0.25$). A key structural difference between the two
price functions emerges in the terminal behavior shown in panel~(e). The affine price,
being unbounded, linearly amplifies the terminal transient in $\bar{x}_2(t)$ required
to meet the boundary condition, producing a sharp price spike near $t=T$. In contrast,
the sigmoid price saturates at $d_{\max}$, which prevents such volatility and ensures a
stable terminal price. The mean field consistency in panel~(f) confirms that
$\bar{x}_2^N(t)$ closely tracks $\bar{x}_2(t)$ under both price functions.

Both price structures yield a valid MFG equilibrium in both cost scenarios, confirming the framework's generality. The sigmoid price is suited when a bounded, non-negative signal is required; the affine price applies when an unrestricted linear response is admissible.

\section{CONCLUSION}

This paper developed a price-coordinated MFG framework for the decentralized charging of large-scale battery populations. One key feature of the model is the treatment of charging power as a state variable.
The existence and uniqueness of the MFG equilibrium were established for any continuous and monotonically increasing price function, a result that holds for any finite time horizon without additional restrictions on the coupling strength. For the special case of an affine price function, a simplified representation of the MFG equilibrium was derived based on two decoupled Riccati equations. Future work will focus on proving the $\varepsilon$-Nash property of the MFG strategy and price-coordinated MFG problems and on extending the framework to accommodate hard constraints on states and control actions.

\section{Acknowledgments} 
The first author acknowledges that ChatGPT was used  to improve the writing of the manuscript, identify the function $\Phi$ in \eqref{eq:aux_cost}, and assist with the coding. All technical content, results, and interpretations were reviewed and validated by the authors.

\bibliographystyle{IEEEtran}
\bibliography{ifacconf}

\end{document}